\theoremstyle{definition} 
\newtheorem{prop}{Proposition}
\theoremstyle{definition} 
\theoremstyle{definition}
\tikzset{naming/.style={align=center,font=\small}}
\tikzset{antenna/.style={insert path={-- coordinate (ant#1) ++(0,0.25) -- +(135:0.25) + (0,0) -- +(45:0.25)}}}
\tikzset{station/.style={naming,draw,shape=dart,shape border rotate=90, minimum width=10mm, minimum height=10mm,outer sep=0pt,inner sep=3pt}}
\tikzset{mobile/.style={naming,draw,shape=rectangle,minimum width=12mm,minimum height=6mm, outer sep=0pt,inner sep=3pt}}
\tikzset{radiation/.style={{decorate,decoration={expanding waves,angle=90,segment length=4pt}}}}
\newcommand{\MBS}[1]{%
\begin{tikzpicture}
\node[station] (base) {#1};

%\draw[line join=bevel] (base.110) -- (base.70) -- (base.north west) -- (base.north east) -- cycle;
\draw[line join=bevel] (base.100) -- (base.80) -- (base.110) -- (base.70) -- (base.north west) -- (base.north east);
\draw[line join=bevel] (base.100) -- (base.70) (base.110) -- (base.north east);

\draw[line cap=rect] ([xshift=-.1768cm,yshift=.6pt]base.north -| base.right tail) [antenna=1];
\draw[line cap=rect] ([yshift=.6pt]ant1 |- base.north) -- node[above,shape=rectangle,inner ysep=+.3333em]{\dots} ([xshift=.1768cm,yshift=.6pt]base.north -| base.left tail) [antenna=2];

\end{tikzpicture}
}
\newcommand{\BS}[1]{%
\begin{tikzpicture}
\node[station] (base) {#1};

%\draw[line join=bevel] (base.110) -- (base.70) -- (base.north west) -- (base.north east) -- cycle;
\draw[line join=bevel] (base.100) -- (base.80) -- (base.110) -- (base.70) -- (base.north west) -- (base.north east);
\draw[line join=bevel] (base.100) -- (base.70) (base.110) -- (base.north east);

% original yshift=.8pt
\draw[line cap=rect] ([yshift=0pt]base.north) [antenna=1];
\end{tikzpicture}
}
\begin{document}

\title{On Edge Caching \\in the Presence of Malicious Users}

\author{\IEEEauthorblockN{Fr\'ed\'eric Gabry, Valerio Bioglio, Ingmar Land}
\IEEEauthorblockA{Mathematical and Algorithmic Sciences Lab\\ France Research Center, Huawei Technologies Co. Ltd.\\
Email: $\{$frederic.gabry,valerio.bioglio,ingmar.land$\}$@huawei.com}} 

\maketitle

\begin{abstract}
In this paper we investigate the problem of optimal cache placement in the presence of malicious mobile users in heterogeneous networks, where small-cell base stations are equipped with caches in order to reduce the overall backhaul load. In particular the malicious users aim at maximizing the congestion of files at the backhaul, i.e., at maximizing the average backhaul rate. For that adversarial model, we derive the achievable average backhaul rate of the heterogeneous network. Moreover, we study the system performance from a game-theoretic perspective, by naturally considering a zero-sum Stackelberg game between the macro-cell base station and the malicious users. We then thoroughly investigate the system performance in the presence of adversaries and we analyze the influence of the system parameters, such as the network topology and the capabilities of the small-cell base stations, on the overall performance of edge caching at the Stackelberg equilibrium. Our results highlight the impact of the malicious users on the overall caching performance of the network and they furthermore show the importance of an adversary-aware content placement at the small-cell base stations.

\end{abstract}

\section{Introduction}
\label{sec:intro}
A promising technique for future 5G wireless networks \cite{cache_magazine} consists of caching content at the wireless edge, as proposed in \cite{fem_2012}. The concept of edge caching emerges from the possibility of significantly reducing the backhaul traffic and thus the latency in content retrieval by bringing the content closer to the end users, e.g. mobile users. Building on the expected capabilities of future multi-tier networks \cite{hetnets}, also referred to as heterogeneous networks (HetNets), several recent works have investigated the potential benefits of caching data in densely deployed small-cell base stations (SBS) equipped with storage capabilities, as in, e.g., \cite{proactive}, called caching HetNets. 

In order to analyze the performance limits and trade-offs of caching in wireless networks, several perspectives have been envisaged. In particular, the information-theoretic perspective has gained considerable traction in recent years, for example in \cite{caching_networks}, where caching metrics are defined and analyzed for large networks. In \cite{fund_lim}, the authors study the fundamental performance limits of caching using network coding techniques. Other frameworks have been used to analyze caching networks, e.g. in \cite{modeling_tradeoffs} where the problem of cache content placement is studied in terms of outage probability.  Another particularly important performance measure for edge caching is the overall energy consumption, or equivalently the global energy efficiency of the network. Numerous works have investigated the performance of caching from an energy perspective: in \cite{EEICC} it was shown that caching at the edge provides significant gains in terms of energy efficiency while in \cite{ee_conext}, the authors study the tradeoff between transport and caching energy.

While the topic of wireless networks in the presence of adversaries has been deeply studied in the last decade, see e.g. \cite{sec09}, the interest in the security of caching HetNets against attackers has only grown recently. Moreover, most of the works which have investigated adversarial models in HetNets limited themselves to passive eavesdropping, i.e. the \emph{secrecy} metric was considered. For instance in \cite{hetnetssec}, the secrecy of HetNets without caching is studied while in \cite{limits_cache_sec} secrecy in caching networks is investigated from an information-theoretic perspective. In \cite{limits_d2d_sec} the secrecy of device-to-device networks is studied while in \cite{icc_sec} the authors introduced several eavesdropper models in caching HetNets. 

Departing from these works and building on the model introduced in \cite{globecom}, where the caching strategy was optimized with respect to backhaul load minimization without malicious users, we consider in this paper a novel model of adversaries aiming at maximizing the backhaul load. We thoroughly investigate this model in the paper and, namely, our main contributions are the following:

\begin{itemize}
\item We formally define the problem of caching at the wireless edge for HetNets with malicious mobile users.
\item We derive the achievable average backhaul rate and we investigate the system caching performance from a game-theoretic perspective. In particular we study the pure-strategy Stackelberg game between the macro-cell base station and the adversaries.
\item We thoroughly investigate the performance of the optimal secure scheme for a HetNet scenario of interest using numerical simulations. 
\item Our results highlight the considerable effect of the presence of malicious users on the performance of edge caching, as well as its impact on the optimal caching scheme in the Stackelberg game equilibrium.
\end{itemize}

This paper is organized as follows. 
In Section \ref{sec:model}, we define our system model, caching scheme and performance measures. 
In Section \ref{sec:perf}, we derive the achievable backhaul rate in the presence of adversaries. In Section \ref{sec:game}, we analyze the network from a game-theoretic perspective.
In Section \ref{sec:num} we thoroughly investigate the performance of our optimal schemes and we compare it to other caching schemes in a heterogeneous network scenario. 
Finally, Section~\ref{sec:conclusions} concludes this paper.

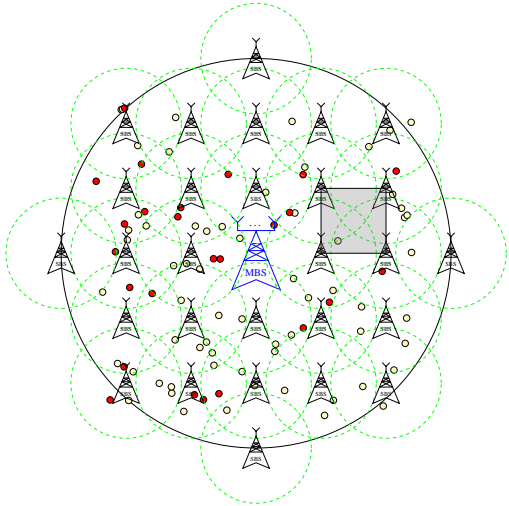
\begin{figure}[ht!]
 \centering
\resizebox{0.37\textwidth}{!}{  \begin{tikzpicture}
    \coordinate (Origin)   at (0,0);
    \coordinate (XAxisMin) at (-5,0);
    \coordinate (XAxisMax) at (5,0);
    \coordinate (YAxisMin) at (0,-5);
    \coordinate (YAxisMax) at (0,5);
    \draw [thick,black] (0,0) circle (6cm);
 
\foreach \x in {1,...,8}{
\foreach \y in {1,...,8}{
\pgfmathparse{rand} \pgfmathsetmacro\xa{\pgfmathresult}
\pgfmathparse{rand}\pgfmathsetmacro\ya{\pgfmathresult} 
\draw [thick,fill=yellow!30] (5*\xa,5*\ya) circle (0.1cm);
}}

\foreach \x in {1,...,5}{
\foreach \y in {1,...,5}{
\pgfmathparse{rand} \pgfmathsetmacro\xa{\pgfmathresult}
\pgfmathparse{rand}\pgfmathsetmacro\ya{\pgfmathresult} 
\draw [thick,fill=red] (5*\xa,5*\ya) circle (0.1cm);
}}

%    \draw[style=help lines,dashed] (-6,-6) grid[step=2cm] (6,6);
\node[blue,thick,inner sep=0pt] (MbS){\MBS{MBS}} (0,0);
%\draw[radiation] (MbS.north) -- +(90:0.5);
    \foreach \x in {-2,...,2}{
      \foreach \y in {-2,...,2}{
       \ifthenelse{\NOT 0 = \x \OR \NOT 0 = \y}{

        \node[scale=0.6] at (2*\x,2*\y) {\BS{SBS}} {};
        \draw[fill=none,dashed,green] (2*\x,2*\y) circle (1.7cm);
  \node  at (2*\x,2*\y) {};}{}
      }
    }   
    \filldraw[fill=gray, fill opacity=0.3, draw=black] (2,2)
        rectangle (4,0);
         \node[scale=0.6] at (0,6) {\BS{SBS}} {};
        \draw[fill=none,dashed,green] (0,6) circle (1.7cm);
  \node  at (0,6) {};
  
    \node[scale=0.6] at (0,-6) {\BS{SBS}} {};
        \draw[fill=none,dashed,green] (0,-6) circle (1.7cm);
  \node  at (0,-6) {};
  
    \node[scale=0.6] at (6,0) {\BS{SBS}} {};
        \draw[fill=none,dashed,green] (6,0) circle (1.7cm);
  \node  at (6,0) {};
  
    \node[scale=0.6] at (-6,0) {\BS{SBS}} {};
        \draw[fill=none,dashed,green] (-6,0) circle (1.7cm);
  \node  at (-6,0) {};
 \end{tikzpicture}}
  \caption{Heterogeneous network.}
  \label{fig:geo_topology}
\end{figure}

\section{System Model}
\label{sec:model}
In this section, we describe the heterogeneous network and the caching scheme investigated throughout the paper.
\subsection{Network Model}
\label{sub:Net}
In the following, we consider the network illustrated in Fig.~\ref{fig:geo_topology}, where
a macro-cell base station (MBS) serves the requests of $U$ wireless users. This network model is similar to the model in \cite{globecom}, with one fundamental difference.  Indeed we assume that there exist in the network $U_a = \alpha U$ adversary mobile users, with $\alpha \in [0,1]$, whose goal is to disrupt the system. 
In particular, their aim is to maximize the congestion of files at the backhaul, i.e., to maximize the backhaul rate, which will be formally defined in Section \ref{sec:perf}.  We should note that this adversarial model can also accurately represent a scenario where the MBS does not have correct information on the file popularity distribution, and hence is assuming the worst-case scenario in order to optimize file placement.

Users request files belonging to a library of $N$ files, $\mathcal{F} = \{ F_1, \dots, F_N \}$, each of size $B$ bits. Since files can be divided into blocks of the same size, the assumption of equally sized files is justifiable. 
We divide the wirless users in two sets, $\mathcal{U}_l$ and $\mathcal{U}_a$ which represent the set of legitimate and adversary users, respectively. 
The $(1-\alpha)U$ legitimate users in $\mathcal{U}_l$ request files according to a known file popularity distribution $p = \{ p_1, \dots, p_N \}$, where file $F_j$ is requested with probability $p_j$. 
On the other hand, the $U_a = \alpha U$ adversary users in $\mathcal{U}_a$ request files according a different strategy, which will be discussed in Section \ref{sec:game}. 
%In general, we can see an adversary as a user requesting files according to a distribution $\hat{p}$, with $\hat{p} \neq p$. 
%The files are requested according to a file popularity distribution  $p = \{ p_1, \dots, p_N \}$, where file $F_j$ is requested with probability $p_j$. 
%The MBS expects the files to be requested according to a file popularity distribution  $p = \{ p_1, \dots, p_N \}$, where file $F_j$ is expected to be requested with probability $p_j$.
In the figures, legitimate users are represented in yellow, while adversary users are drawn in red.  

\def\firstcircle{(1,1) circle (1.7cm)}
\def\secondcircle{(1,-1) circle (1.7cm)}
\def\thirdcircle{(-1,-1) circle (1.7cm)}
\def\fourthcircle{(-1,1) circle (1.7cm)}
\def\rectangle{(-1,1) rectangle (1,-1)}
\begin{figure}[ht!]
  \centering
  \resizebox{0.3\textwidth}{!}{
  \begin{tikzpicture}[scale=2]
  \clip (-1.5,-1.5) rectangle (1.5,1.5);
   \draw[black,fill=blue!20] (-1,1)
        rectangle (1,-1);

    \draw[fill=none,dashed,green] \firstcircle ;
    \draw[fill=none,dashed,green] \secondcircle;
    \draw[fill=none,dashed,green] \thirdcircle ;
 \draw[fill=none,dashed,green] \fourthcircle ;

    \begin{scope}
      \clip \rectangle;
      \fill[red!10] \firstcircle;
    \end{scope}
     \begin{scope}
      \clip \rectangle;
      \fill[red!10] \secondcircle;
    \end{scope}
     \begin{scope}
      \clip \rectangle;
      \fill[red!10] \thirdcircle;
    \end{scope}
    \begin{scope}
      \clip \rectangle;
      \fill[red!10] \fourthcircle;
    \end{scope}

    \begin{scope}
      \clip \rectangle;
      \clip \firstcircle;
      \fill[green!10] \secondcircle;
    \end{scope}

 \begin{scope}
      \clip \rectangle;
      \clip \firstcircle;
      \fill[green!10] \fourthcircle;
    \end{scope}
    
     \begin{scope}
      \clip \rectangle;
      \clip \thirdcircle;
      \fill[green!10] \secondcircle;
    \end{scope}
    
     \begin{scope}
      \clip \rectangle;
      \clip \thirdcircle;
      \fill[green!10] \fourthcircle;
    \end{scope}
    
     \begin{scope}
      \clip \rectangle;
      \clip \secondcircle;
      \clip \thirdcircle;
      \fill[yellow!10] \fourthcircle;
    \end{scope}
    
      \begin{scope}
      \clip \rectangle;
      \clip \secondcircle;
      \clip \thirdcircle;
      \fill[yellow!10] \firstcircle;
    \end{scope}
    
      \begin{scope}
      \clip \rectangle;
      \clip \firstcircle;
      \clip \fourthcircle;
      \fill[yellow!10] \secondcircle;
    \end{scope}
    
        \begin{scope}
      \clip \rectangle;
      \clip \firstcircle;
      \clip \fourthcircle;
      \fill[yellow!10] \thirdcircle;
    \end{scope}
    
          \begin{scope}
      \clip \rectangle;
      \clip \firstcircle;
      \clip \secondcircle;
      \clip \fourthcircle;
      \fill[blue!10] \thirdcircle;
    \end{scope}

\foreach \x in {1,...,3}{
\foreach \y in {1,...,3}{
\pgfmathparse{rand} \pgfmathsetmacro\xa{\pgfmathresult}
\pgfmathparse{rand}\pgfmathsetmacro\ya{\pgfmathresult} 
\draw [thick,fill=yellow!30] (\xa,\ya) circle (0.05cm);
}}

\foreach \x in {1,...,2}{
\foreach \y in {1,...,2}{
\pgfmathparse{rand} \pgfmathsetmacro\xa{\pgfmathresult}
\pgfmathparse{rand}\pgfmathsetmacro\ya{\pgfmathresult} 
\draw [thick,fill=red] (\xa,\ya) circle (0.05cm);
}}
 
   \node[scale=0.9] at (-1,-1) {\BS{SBS$_1$}} {};
       % \draw[fill=none,dashed,green] (-1,-1) circle (1.7cm);
        \node[scale=0.9] at (-1,1) {\BS{SBS$_2$}} {};
      %  \draw[fill=none,dashed,green] (-1,1) circle (1.7cm);
        \node[scale=0.9] at (1,-1) {\BS{SBS$_3$}} {};
      %  \draw[fill=none,dashed,green] (1,-1) circle (1.7cm);
        \node[scale=0.9] at (1,1) {\BS{SBS$_4$}} {};
       % \draw[fill=none,dashed,green] (1,1) circle (1.7cm);
         \draw[fill=none,dashed,green] \firstcircle ;
    \draw[fill=none,dashed,green] \secondcircle;
    \draw[fill=none,dashed,green] \thirdcircle ;
 \draw[fill=none,dashed,green] \fourthcircle ;
   \end{tikzpicture}}
  \caption{Small cell topology.}
  \label{fig:square}
\end{figure}
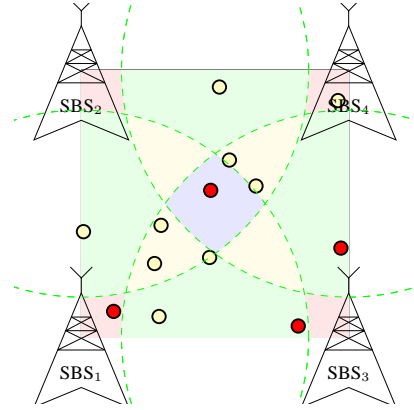

In order to serve user requests within short distance, $N_{\text{SBS}}$ small-cell base stations are deployed in the coverage range of the MBS. 
Each SBS is equipped with a cache of size $M \cdot B$ bits (i.e., it can store up to $M < N$ complete files). 
We assume that content is delivered without errors as long as a user is within the coverage range of the base station. 
Each user $u \in \{ 1, \dots, U \}$ requests for a file in $\mathcal{F}$, and each request is initially served by the $d_u$ $\text{SBS}$s within the coverage range with the cache content. 
In the following, we call $\gamma_i$ the probability for a user to be served by $d_u = i$ SBSs, which depends on its location and on the deployment of the SBSs in the area (see Fig.~\ref{fig:square}). 
If the requested file is not completely present in the SBSs' caches, the missing data has to be recovered by the MBS and sent to the user via the backhaul link, thus increasing the backhaul traffic. The purpose of caching files at the SBSs is then to minimize this backhaul rate, i.e., the number of bits that the MBS has to send directly to the users. % To that aim, the MBS exploits its knowledge of the expected file popularity and  the topology of the network, as illustrated in Fig.~\ref{fig:graph}, to fill the caches of the SBSs. 
%In fact, the topology of the connections may vary during time, and at each instant $t$, the connection network has a structure . 

\subsection{Coded Caching Scheme}
\label{sub:scheme}
A caching scheme can be divided into two phases, namely the \emph{placement phase} and the \emph{delivery phase}. 
In the placement phase, the $\text{SBS}$s caches are filled according to the chosen placement strategy. 
This phase typically occurs at a moment with a low amount of network traffic, e.g. at night. 
In the delivery phase, the users send requests for files in the library, which are initially served by the $\text{SBS}$s covering their locations. 
If the SBSs cannot send enough information for the users to recover the file, the $\text{MBS}$ has to retrieve the missing information through the backhaul and deliver it to the users. 

In \cite{globecom}, the authors showed that storing MDS encoded packets is always better than storing fragments in terms of retrieval probability. 
For MDS($k,n$) codes, such as Reed-Solomon (RS) codes, $k \geq n$ encoded packets are created such that any subset of $n$ packets are necessary and sufficient to recover the initial information. As a consequence, in the following we will study the performance of a MDS coded caching scheme. 

Each file of the library is initially split into $n$ \emph{fragments}, i.e., $F_j = \{ f_{1}^{(j)}, \cdots, f_{n}^{(j)} \}$ for all $1 \leq j \leq N$. The caching scheme is then defined as follows: 
\begin{enumerate}
\item \emph{Placement phase}: The $\text{MBS}$ uses the $n$ fragments of each file to create $k_j = n + (N_{SBS} - 1) m_j$ encoded packets using a MDS code, sending $m_j$ of them to each SBS. 
We note that the MBS keeps $n-m_j$ encoded packets unsent. 
Each $\text{SBS}$ receives $m_j$ different encoded packets for each file $F_j$ to be stored in its cache, with $\mathbf{m}=[m_{1} \cdots m_{N}]$ and $\sum_{j}m_j \leq M$. 
\item \emph{Delivery phase}: A user, requesting for file $F_j$, and within coverage range of $d_u \geq 1$ $\text{SBS}$s, receives $m_j d_u$ different encoded packets. 
If $m_j d_u \geq n$, the file can be recovered by the user due to the MDS-property of codes. 
Otherwise, the $\text{MBS}$ has to send the remaining $n - m_j d_u$ encoded packets from its collection of encoded packets. % randomly drawn from its collection. 
Due to the MDS-property of the codes, the user does not receive replicated packets, and it can hence recover the requested file. 
\end{enumerate}

%\input{figures/graph.tex}

%\subsection{Average Backhaul Rate Calculation}

\section{Average Backhaul Rate}
\label{sec:perf}
In this section, we derive the average backhaul rate per user in the heterogeneous network with malicious users. 
In the following, we denote by $\mathbf{q}$ the proportional placement of files stored in the SBSs caches, i.e., $\mathbf{q} \triangleq \mathbf{m}/n$. Hence $q_j = m_j/n$ represents the proportion of file $F_j$ stored in the caches.

If we denote by $R_l$ the average rate of a legitimate user $u_l \in \mathcal{U}_l$ and $R_a$ is the average rate of an adversary user $u_a \in \mathcal{U}_a$, the total average backhaul rate of the system is given by
\begin{equation}
\label{eq:R_tot}
\bar{R} = \alpha R_a + (1-\alpha) R_l.
\end{equation}
\subsection{Derivation of  $R_l$}
The average backhaul rate of a user requesting files according to a distribution $p$ can be calculated according to the following Proposition in \cite{globecom}: 
\begin{prop}[\cite{globecom} Prop. 1]
\label{prop:rate}
The average backhaul rate for an encoded caching placement scheme $\mathcal{C}_{\mathbf{m}}^{\text{MDS}}$ defined by the placement $\mathbf{q}=[q_{1} \cdots q_{N}]$ with expected requests vector $p$ can be calculated as 
\begin{equation}
\label{eq:th_rate}
R_{(\mathcal{C}_{\mathbf{q}}^{\text{MDS}})} = \sum_{d=1}^{S} \sum_{j=1}^{N} \gamma_d p_j  \max \left( 1 - d q_j, 0 \right),
\end{equation}
where $S \leq N_{\text{SBS}}$ is the maximum number of $\text{SBS}$s serving a user.
Hence, the average backhaul rate of a legitimate user can be  calculated as
\begin{equation}
\label{eq:R_l}
R_l = \sum_{d=1}^{S} \sum_{j=1}^{N} \gamma_d p_j \max \left( 1 - d q_j, 0 \right).
\end{equation}
\end{prop}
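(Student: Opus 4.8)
The plan is to prove \eqref{eq:th_rate} by a direct expectation computation over the two independent sources of randomness in the model — the identity of the requested file and the number of SBSs covering the requesting user — and then to read off \eqref{eq:R_l} as the special case in which the request distribution is the legitimate popularity profile $p$.

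First I would fix a user and condition on the joint event that it requests file $F_j$, which happens with probability $p_j$, and that it is covered by exactly $d$ SBSs, which happens with probability $\gamma_d$, for $1 \le d \le S$. Under the placement phase of the MDS scheme of Section~\ref{sub:scheme}, the MBS generates $k_j = n + (N_{\text{SBS}}-1)m_j$ encoded packets and assigns a \emph{disjoint} block of $m_j$ packets to each SBS, so the $d$ covering SBSs jointly deliver $d\,m_j$ pairwise-distinct encoded packets to the user. The crucial step, and the only one that invokes the MDS property, is to argue that these $d\,m_j$ packets are exactly as useful as $d\,m_j$ generic coded symbols: if $d\,m_j \ge n$ the file is already decodable and the conditional backhaul contribution is $0$; otherwise the MBS must supply $n - d\,m_j$ further packets drawn from its reserve of $k_j - N_{\text{SBS}}\,m_j = n - m_j$ unsent packets, which is always possible since $d \ge 1$ forces $n - m_j \ge n - d\,m_j$, and the union of the resulting $n$ distinct packets reconstructs $F_j$ by the MDS($k_j,n$) guarantee. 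Normalising so that one file corresponds to $n$ packets, the conditional backhaul rate equals $\max(n - d\,m_j, 0)/n = \max(1 - d\,q_j, 0)$ with $q_j = m_j/n$.

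Next I would remove the conditioning. Because the requested file and the coverage number $d$ are independent — the former is drawn from the request distribution, while the latter depends only on the user's location and on the SBS deployment — the law of total expectation yields $R = \sum_{d=1}^{S}\sum_{j=1}^{N}\gamma_d\,p_j\,\max(1 - d\,q_j, 0)$, which is exactly \eqref{eq:th_rate}. Specialising the request distribution to the legitimate popularity vector $p$ gives \eqref{eq:R_l} for $R_l$ without further work.

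I expect the main obstacle to be the bookkeeping inside the decodability argument: one has to verify that, for every coverage value $d \le S \le N_{\text{SBS}}$, the MBS retains enough unsent packets to make up the worst-case shortfall, and that the disjointness of the per-SBS packet blocks together with the MDS property makes every received packet innovative, so that no redundant bits are ever counted towards the backhaul. Once that is settled, the remaining averaging is a one-line computation.
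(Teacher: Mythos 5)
Your proof is correct. Note that the paper itself offers no proof of this proposition — it is imported verbatim from the cited reference (Prop.~1 of the prior work on coded caching without adversaries) — so there is no in-paper argument to compare against; your derivation by conditioning on the requested file and on the coverage number $d$, invoking the MDS property plus the disjointness of the per-SBS packet blocks to get the conditional backhaul cost $\max(1-dq_j,0)$, and then averaging over the independent distributions $p$ and $\gamma$, is exactly the standard argument one would expect behind the cited result, and your check that the MBS's reserve of $n-m_j$ unsent packets always covers the shortfall $n-dm_j$ for $d\ge 1$ is the right bookkeeping detail to verify.
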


\subsection{Derivation of  $R_a$}
Similarly, we can derive the average backhaul rate of an adversary user $u_a \in \mathcal{U}_a$ as
\begin{equation}
\label{eq:R_a}
R_a = \sum_{d=1}^{S} \sum_{j=1}^{N} \gamma_d \hat{p_j} \max \left( 1 - d q_j, 0 \right),
\end{equation}
where $\hat{p_j}$ is the popularity of file $j$ induced by the requests of the adversaries. This result can be described in a different manner by writing the average backhaul rate of an adversarial user as 
\begin{equation*}
%\label{eq:R_a}
R_a = \frac{1}{|\mathcal{U}_a|} \sum_{u=1}^{|\mathcal{U}_a|}\sum_{d=1}^{S}  \gamma_d \max \left( 1 - d q_{j_u}, 0 \right),
\end{equation*}
where $j_u$ is the file requested by malicious user $u$. This description is equivalent to the one of Equation \eqref{eq:R_a}.

\section{Game-Theoretic Analysis}
\label{sec:game}
In this section we analyze the competitive interaction between the  $U_a$ adversaries and the MBS using a game-theoretic framework, which is especially suitable for our scenario.

\subsection{Definition of the Game}
Formally, we define the Stackelberg game $\mathcal{G}$ between the MBS and the malicious users as follows:
\begin{itemize}
\item \textbf{Leader:} the MBS, with utility function $\mathcal{U}_{\text{MBS}} = -\bar{R}$ and strategy space given by the placement scheme $\mathbf{q} = [q_1, \dots, q_N]$;
\item \textbf{Follower:} the $U_a$ malicious users, with utility function $\mathcal{U}_{\text{adv}} = -\mathcal{U}_{\text{MBS}}=\bar{R}$ and strategy space $\mathbf{j}= [j_1, \dots, j_{U_a}]$, i.e., the files requested by each malicious user. 
\end{itemize}

\subsection{Solution of the Game}
In this section, we solve the Stackelberg game defined previously. Note that we restrict the strategy space of players to pure strategies, i.e. we do not consider mixed strategies for the MBS; this restriction is justifiable from practical system considerations.

\begin{prop}
The Stackelberg game $\mathcal{G}$ has a unique solution $(\bf{q}^*, \bf{j}^*)$, leading to a utility $\bar{R}(\bf{q}^*, \bf{j}^*)$.
\end{prop}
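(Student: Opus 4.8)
The plan is to solve $\mathcal{G}$ by backward induction. First fix a placement $\mathbf{q}$ and compute the followers' best response. By \eqref{eq:R_tot} the adversaries' utility is $\bar R=\alpha R_a+(1-\alpha)R_l$, and by \eqref{eq:R_l} the term $R_l$ is independent of the request vector $\mathbf{j}$, so maximizing $\bar R$ over $\mathbf{j}$ reduces to maximizing $R_a$. Using the per-user expression for $R_a$ stated just below \eqref{eq:R_a}, malicious user $u$ contributes $h(q_{j_u})$, where $h(x)\triangleq\sum_{d=1}^{S}\gamma_d\max(1-dx,0)$, and these contributions decouple across users. Since $h$ is non-increasing, each adversary's best response is to request a least-cached file, so $R_a(\mathbf{q},\mathbf{j}^\star(\mathbf{q}))=\max_{j}h(q_j)$; a fixed tie-break (e.g.\ lowest index) makes $\mathbf{q}\mapsto\mathbf{j}^\star(\mathbf{q})$ single-valued.

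Substituting this best response, the MBS faces the reduced problem
\[
\min_{\mathbf{q}\in\mathcal{Q}}\ \Phi(\mathbf{q}),\qquad
\Phi(\mathbf{q})\triangleq\alpha\,\max_{1\le j\le N}h(q_j)+(1-\alpha)\sum_{j=1}^{N}p_j\,h(q_j),
\]
over $\mathcal{Q}=\{\mathbf{q}\in\mathbb{R}^{N}:0\le q_j\le 1,\ \sum_j q_j\le M/n\}$, which encodes $m_j\ge 0$, $m_j\le n$, and $\sum_j m_j\le M$. Each $q_j\mapsto\max(1-dq_j,0)$ is convex, so $h$ is convex, $\sum_j p_j h(q_j)$ is convex, and $\max_j h(q_j)$ is convex (a pointwise maximum of convex functions); hence $\Phi$ is convex and continuous and $\mathcal{Q}$ is a nonempty compact polytope. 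By the Weierstrass theorem a minimizer $\mathbf{q}^\star$ exists, and $\bigl(\mathbf{q}^\star,\mathbf{j}^\star(\mathbf{q}^\star)\bigr)$ is a subgame-perfect Stackelberg equilibrium with equilibrium utility $\bar R(\mathbf{q}^\star,\mathbf{j}^\star)=\Phi(\mathbf{q}^\star)$.

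The remaining --- and hardest --- point is uniqueness. Convexity of $\Phi$ gives immediately that the set of optimal placements is convex and, crucially, that the optimal value is unique; since this value equals the equilibrium utility and the induced adversarial popularity is supported on $\arg\min_j q_j^\star$, the equilibrium \emph{outcome} is unique. Pinning down the placement $\mathbf{q}^\star$ itself is more delicate, because $\Phi$ is only piecewise linear, so an a priori flat optimal face cannot be excluded by convexity alone: one must either argue that the first-order (water-filling / KKT) conditions for $\min_{\mathcal{Q}}\Phi$ fix every cache-relevant coordinate under a mild genericity hypothesis on $(p,\gamma,M/n)$, disposing of payoff-irrelevant over-caching by the convention of not storing beyond what any served user can use, or else state uniqueness at the level of the equilibrium outcome. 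I expect the case analysis of which constraints of $\mathcal{Q}$ are active at $\mathbf{q}^\star$ to be the technical core of this last step.
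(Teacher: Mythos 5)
Your proposal follows essentially the same route as the paper: backward induction, with each adversary's best response being to request the least-cached file (so $R_a$ collapses to $\sum_d \gamma_d \max(1-d\min(\mathbf{q}),0)$), followed by a convex minimization over $\mathbf{q}$ for the MBS. In fact you are more careful than the paper on the one delicate point: the paper's proof simply states that the leader's objective is convex in $\mathbf{q}$ and ``hence the minimization problem admits a solution $\mathbf{q}^*$,'' which establishes existence and uniqueness of the optimal \emph{value} but not of the minimizer of a piecewise-linear objective --- exactly the gap you flag and the paper leaves unaddressed. So your honest caveat about uniqueness holding only at the level of the equilibrium outcome (absent a further KKT/genericity argument) is not a defect of your write-up relative to the paper; it is a correct diagnosis of what the paper's own argument actually proves.
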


\begin{proof}
Each player aims to maximize its own utility. In particular, the $U_a$ malicious users, acting as the follower of the Stackelberg game $\mathcal{G}$, solve first the following optimization:
\begin{equation}
\max_{\bf{j}} \bar{R},
\end{equation}
for $\bf{q}$ fixed. We can easily derive that $\forall \bf{q}$ and $\forall u \in \mathcal{U}_a$, the optimal strategy of $u$ is given by 
\begin{equation}
j^{*}_u(\mathbf{q})=\arg\min_i q(i),
\end{equation}
i.e., the optimal strategy of every attacker is to request for the file $j^{*}$ with minimal $q(j^{*})$. This result can also be readily obtained by deriving $\frac{\partial \bar{R}}{\partial{q_{j_u}}}$.
Hence, the request distribution of adversary users is given by
\begin{equation*}
\hat{p^{*}_j} = \left\{ \begin{array}{l c}
1 & \text{if } j = j^{*} \\
0 & \text{otherwise}.
\end{array} \right.
\end{equation*}
As a consequence, the expected backhaul rate for an adversary user can be simplified as
\begin{align}
%\label{eq:R_a}
R_a &= \sum_{d=1}^{S} \gamma_d \max \left( 1 - d q_{j^{*}}, 0 \right) \\
&= \sum_{d=1}^{S} \gamma_d \max \left( 1 - d \min(\mathbf{q}), 0 \right).
\end{align}
Now the MBS as the leader of the Stackelberg game solves
\begin{align}
\bar{R}(\bf{q}^*, \bf{j}^*)&= \min_{\mathbf{q}}\bar{R}(\bf{q}, \bf{j}^*)\\
&=\min_{\mathbf{q}} (1-\alpha)\sum_{d=1}^{S} \sum_{j=1}^{N} \gamma_d p_j \max \left( 1 - d q_j, 0 \right) \nonumber \\&\:\:\:\:+\alpha \sum_{d=1}^{S} \gamma_d \max \left( 1 - d \min(\mathbf{q}), 0 \right). \label{eq:SE}
\end{align}
Equation \eqref{eq:SE} is convex in $\bf{q}$ and hence the minimization problem $\min_{\mathbf{q}}\bar{R}(\bf{q}, \bf{j}^*)$ admits a solution $\bf{q}^*$ which concludes the proof of the proposition.
\end{proof}

The extreme cases $\alpha \rightarrow 0$ and $\alpha \rightarrow 1$ are of particular interest and are discussed in the next section.

\subsection{Extreme Cases}
\label{sec:extreme}
\paragraph{No malicious user, $\alpha \rightarrow 0$}
If $\alpha \rightarrow 0$, according to (\ref{eq:R_tot}) we have that $\bar{R} \rightarrow R_l$. 
In this case, the number of adversaries is negligible compared to the number of legitimate users, and the objective function of the optimization problem turns out to be the same of \cite{globecom}. 

\paragraph{No legitimate user, $\alpha \rightarrow 1$}
If $\alpha \rightarrow 1$, the Stackelberg equilibrium strategy of the MBS, given by Equation \eqref{eq:SE} becomes:
\begin{equation}
\min_{\mathbf{q}} \sum_{d=1}^{S} \gamma_d \max \left( 1 - d \min(\mathbf{q}, 0 \right)).
\label{eq:WCS}
\end{equation}
This represents the worst-case scenario, where all the users are adversaries. 
In this case, the MBS has to solve the optimization problem \eqref{eq:WCS} which is equivalent after some manipulations to
\begin{equation*}
\max_{\mathbf{q}} \left( \min_{j} (q_j) \right),
\end{equation*}
for which the Stackelberg equilibrium strategy is given by the uniform solution, i.e., 
\begin{equation}
\mathbf{q}^{*} = \left[\frac{M}{N} \ldots \frac{M}{N}\right]. 
\end{equation}
The Stackelberg equilibrium results in that case in the utility:
\begin{equation}
\bar{R}(\mathbf{q}^*, \mathbf{j}^*) = \sum_{d=1}^{S} \gamma_d \max \left( 1 - d\frac{M}{N}, 0 \right).
\end{equation}

\section{Numerical Illustrations}
\label{sec:num}
In this section we investigate the performance of the heterogeneous network in terms of backhaul rate of our coded caching scheme in the presence of adversary users. 
We evaluate the performance in a HetNet topology of particular interest, and we should note that our numerical results can be further generalized to any network topology. 
%In particular the challenging problem of the optimization of the locations of the small-cell base stations is of considerable practical and theoretical interest, but outside the scope of this paper. 

\subsection{Network Topology}
\label{num:topology}
We consider the heterogeneous network depicted in Fig.~\ref{fig:geo_topology}. 
In this scenario, the MBS has a coverage area of radius of $D=500$ meters. 
The SBSs are deployed in a regular grid, with a distance $d=60$ meters between SBSs. 
Each SBS has a coverage area of radius $r$, with $ d/\sqrt{2} \leq r \leq d$. 
This means that the coverage areas of the SBSs are overlapping, as shown in Fig.~\ref{fig:square}, which corresponds to $S=4$. 

Considering a uniform density $\rho$ of the users, the value of $\gamma_d$ is calculated from the sum of the coverage areas. 
In practice, if we call $\mathcal{A}_d$ the sum of the areas where a user can be served by $d$ SBSs, the probability $\gamma_d$ that a user is served by $d$ SBSs is given by
\begin{equation*}
\gamma_d = \frac{ \mathcal{A}_d}{\sum_{i=1}^{S} \mathcal{A}_i}.
\end{equation*}

We consider a uniform distribution of the users, with density $\rho=0.05$ users$/m^{2}$. Using these values corresponds to $316$ small-cell base stations being deployed, covering $U=39,269$ mobile users. 
The request probability of the files $p=[p_{1} \cdots p_{N}]$ for the legitimate users $u \in \mathcal{U}_l$ is distributed according to a Zipf law of parameter $z$, i.e.,
\begin{equation*}
p_{j}=\frac{1/j^{z}}{\sum_j 1/j^{z}},
\end{equation*}
where $z$ represents the skewness of the distribution  \cite{zipf}.

\subsection{Numerical Results}

In the following, we study the performance of the coded cache placement in the presence of adversary users $u \in \mathcal{U}_a$. 
%Unless specified otherwise, the SBSs have a coverage area of $r=60$ meters with a storage capacity of $M=20$ files. 
The library contains $N=200$ files, whose popularity for legitimate users follows a Zipf distribution of parameter $z=0.7$. 

\begin{figure}[!t]
\centering
\includegraphics[width=0.44\textwidth]{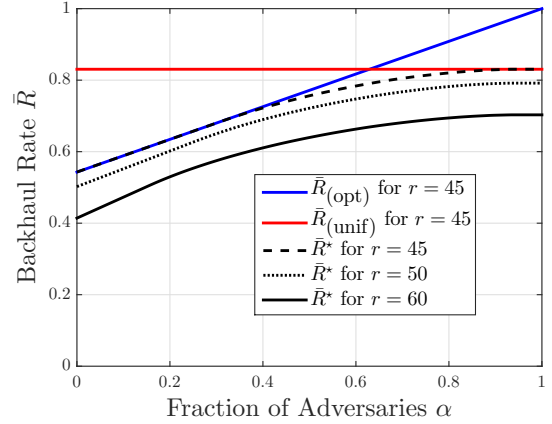}
\caption{Average backhaul rate as a function of the adversaries proportion $\alpha$, with N$=200$ files, $z=0.7$, $M=20$ files and varying $r$.}
\label{fig:R_alpha}
\end{figure}
In Figure \ref{fig:R_alpha}, we show the average backhaul rate $\bar{R}(\mathbf{q}^{*},\mathbf{j}^{*})$ in the Stackelberg equilibrium as a function of the fraction of adversaries $\alpha$. 
Each curve is calculated for a different value of the coverage range of the SBSs $r$. 
As a reference, we also show the backhaul rate of the uniform and the optimal caching placement presented in \cite{globecom}, i.e., the Stackelberg equilibrium strategies for $\alpha=1$ and $\alpha=0$, respectively.  
%As expected, the number of adversaries has an impact on the backhaul rate. 
As expected, the average backhaul rate grows with the number of adversaries. 
It is interesting to note that $r$ has a linear impact on the backhaul rate, where the rate approximately decreases by $0.04$ per meter of coverage. 
Moreover, by comparing the curves with the optimal caching strategies, we see that each curve is initially similar to the optimal caching strategy without adversaries, and finally converges to the uniform strategy. This highlights the existence of three regimes depending on $\alpha$:
\begin{itemize}
\item{$0 < \alpha < \alpha^{(1)}_{\text{thr}}$:} The Stackelberg equilibrium strategy of the MBS is independent of the malicious users' attacks, i.e. the MBS minimizes $R_l$.
\item{$\alpha^{(1)}_{\text{thr}} < \alpha < \alpha^{(2)}_{\text{thr}}$:} The MBS should adaptively optimize its placement depending on $\alpha$.
\item{$\alpha^{(2)}_{\text{thr}} < \alpha < 1$:} The MBS is forced to choose the minimax solution, i.e., the uniform placement.
\end{itemize}

\begin{figure}[!t]
\centering
\includegraphics[width=0.44\textwidth]{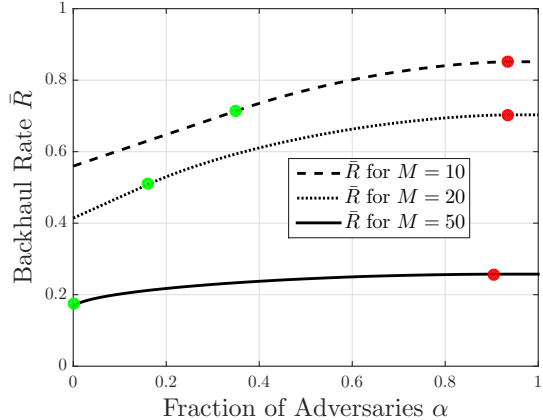}
\caption{Average backhaul rate as a function of the adversaries proportion $\alpha$, with N$=200$ files, $z=0.7$, $r=60$ meters and varying $M$.}
\label{fig:R_alpha_M}
\end{figure}
Figure \ref{fig:R_alpha_M} depicts the average backhaul rate as a function of the fraction of adversaries $\alpha$ with varying parameter the cache size $M$. 
As before, the number of adversaries has an impact on the backhaul rate. 
However, the larger is the size $M$ of the cache, the less impactful the attacks are. 
This is due to the fact that the optimal caching strategy in case of $\alpha=1$ is to store fragments uniformly, which is far away from the optimal caching placement if the cache size is small \cite{globecom}. 
Moreover, the points where the curve branches off from the optimal caching and gathers the uniform strategy seems to depend on the cache size $M$.   
To study in detail the behavior of the branching and gathering point, we depicted them as green and red circles, corresponding to $\alpha^{(1)}_{\text{thr}}$ and $\alpha^{(2)}_{\text{thr}}$, respectively.
Surprisingly, the larger the cache size, the smaller $\alpha^{(1)}$ and $\alpha^{(2)}$ are. 
This is due to the fact that, as shown in Figure \ref{fig:R_alpha}, the penalty of using a uniform placement is not so important in case of large cache size. 
In this case, the large cache size provides the possibility to timely react to the presence of adversaries, slightly changing the values of the number of fragments stored.  
On the other hand, if the cache size is small, the presence of a small number of adversaries does not have a large impact on the caching system, since the uniform caching performs far worse than the optimal strategy. 
In this case, it is better to keep the optimal strategy even in presence of a small number of adversaries, since the small size of the cache does not offer the possibility to substitute many packets.

\begin{figure}[!t]
\centering
\includegraphics[width=0.44\textwidth]{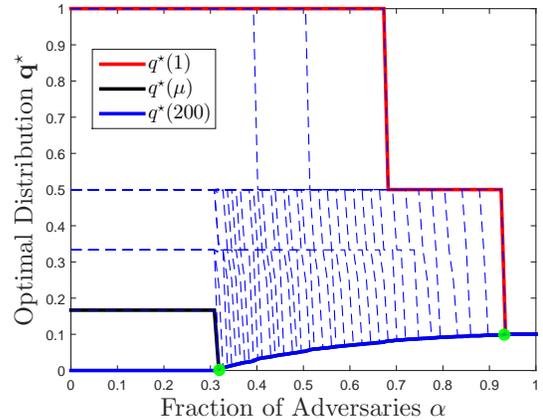}
\caption{Optimal placement $\mathbf{q}^{*}$ as a function of $\alpha$, with N$=200$ files, $z=0.7$, $M=20$ and $r=45$ meters.}
\label{fig:Q}
\end{figure}
To study in detail the behavior of the branching and gathering point, in Figure \ref{fig:Q} we show the entries of $\mathbf{q}^{*}$ as a function of $\alpha$.
In the Figure, we highlight the values of the minimum entry $q^*_{1}$, the maximum entry $q^*_{200}$ and the entry $q^*_{\mu}$, with $\mu \triangleq \max_\mu (q(\mu) \neq 0)$. 
For $\alpha=0$, we have that $q^*_{200}=0$, since the popularity of file $F_{200}$ is too small to be stored in such a small cache. 
For a similar reason, $q^*_{1}=1$, since the file is so popular that it has to be stored in every cache. 
The other files have a different amount of fragments stored. We should note that for  $\alpha=0$, the solution of the optimization problem of \cite{globecom} has a very regular structure, as we observe that $\mathbf{q}^{*}=[1,\frac{1}{2}, \ldots \frac{1}{S} \ldots 0]$.
Around $\alpha^{(1)}_{\text{thr}}=0.32$, the penalty due to the adversaries starts to have an impact on the performance of the system, hence the less popular files begin to be stored. 
Since all the adversaries ask for the less stored file, all these low popular files tend to have the same number of fragment stored. 
As the ratio of adversaries $\alpha$ grows, more files lose fragments in favor of less popular ones.  
Around $\alpha^{(2)}_{\text{thr}}=0.93$, all the files have the same number of fragments stored, and the uniform caching strategy is reached, i.e., the extreme $\alpha \rightarrow 1$ scenario described in Section \ref{sec:extreme}.

%\vspace{-0.5cm}
\section{Conclusions}
\label{sec:conclusions}
We investigated the problem of MDS-encoded content placement at the cache-equipped small-cell base stations at the wireless edge in the presence of malicious end users. First, we derived the achievable average backhaul rate for our scenario with adversaries. We then investigated the competitive interaction between the macro-cell base station and the malicious users from a Stackelberg game perspective, for which we derived the Stackelberg equilibrium in the general case and in the worst-case scenario. We then thoroughly analyzed the performance of caching at the wireless edge in the presence of malicious users for a relevant heterogeneous scenario by comparing it to the optimal placement without adversaries and to an uniform placement, which was shown to be optimal when the number of adversaries grows large. We also studied the influence of the key parameters, such as the capabilities of the small-cell base stations and the network topology. Our findings showed the negative impact of the presence of attackers on the caching performance and also showed the existence of three regimes depending on the fraction of adversaries $\alpha$. Hence our results showed the crucial importance of optimizing adaptively content placement depending on the number of malicious users. Further work will include the estimation of the number of attackers depending on the request distribution, and the use of a different game-theoretic framework to account for the uncertainty on the players' strategies, e.g. Bayesian game models.
%\vspace{+34ex}
%\input{appendix.tex}
%\vspace{3cm}

\bibliographystyle{IEEEbib}
\bibliography{distributed_caching}

\end{document}